\newtheorem{thm}{Theorem}[section]
\newtheorem{remark}[thm]{Remark}
\newtheorem{counterexample}[thm]{Counterexample}
\newtheorem{theorem}[thm]{Theorem}
\DeclareMathOperator{\tr}{tr}
\DeclareMathOperator{\id}{id}
\title{From NLS type matrix refactorisation problems to set-theoretical solutions  of  the 2-  and  3-simplex equations}
\date{}
\author[1]{S. Konstantinou-Rizos\thanks{skonstantin84@gmail.com}}
\affil[1]{Centre of Integrable Systems, P.G. Demidov Yaroslavl State University, Yaroslavl, Russia}
\patchcmd{\subequations}{\alph{equation}}{\alphalph{\value{equation}}}{}{}
\begin{document}

\maketitle

\begin{abstract}
We present a method for constructing hierarchies of solutions to $n$-simplex equations by variating the spectral parameter in  their Lax  representation. We use this method to derive new solutions to the set-theoretical 2-  and  3-simplex equations which are related to the Adler map and Nonlinear Schr\"odinger (NLS) type equations. Moreover, we prove that some of the derived Yang--Baxter maps are completely integrable.
\end{abstract}

\bigskip

\begin{quotation}
\noindent{\bf PACS numbers:}
02.30.Ik, 02.90.+p, 03.65.Fd.
\end{quotation}

\begin{quotation}
\noindent{\bf MSC 2020:}
35Q55, 16T25.
\end{quotation}

\begin{quotation}
\noindent{\bf Keywords:} Yang--Baxter maps, Zamolodchikov tetrahedron equation, NLS equation, Adler map, NLS type Yang--Baxter maps, variation of the spectral parameter.
\end{quotation}

\allowdisplaybreaks

\section{Introduction}
By $n$-simplex maps we mean  solutions to the set-theoretical $n$-simplex equations which are important equations of Mathematical Physics. These  equations find applications in various fields of mathematics, especially in the theory of integrable systems.

The Yang--Baxter and tetrahedron maps are solutions to the set-theoretical 2-simplex (namely, the Yang--Baxter) and 3-simplex (namely, the Zamolodchikov tetrahedron) equations, respectively.

One of the most important applications of $n$-simplex maps is  their relation to integrable systems of partial differential equations (PDEs) and partial difference equations (P$\Delta$Es). Known connections include the derivation of $n$-simplex maps via the symmetries of integrable P$\Delta$Es \cite{Kassotakis-Tetrahedron, Pap-Tongas, Pap-Tongas-Veselov}, the construction of integrable P$\Delta$Es using separable invariants of Yang--Baxter maps \cite{Pavlos-Maciej, Pavlos-Maciej-2}, and the construction of Yang--Baxter maps via Darboux transformations \cite{Giota-Miky, GKM, Sokor-2020, Sokor-Sasha, Papamikos-JP-Sasha} and their linearisations \cite{BIKRP, IKKRP}.

This paper focuses on methods for constructing Yang--Baxter and tetrahedron maps. The most, probably, important construction of Yang--Baxter and Zamolodchikov maps is from the local  1-simplex \cite{Suris} and 2-simplex \cite{Kashaev-Sergeev, Nijhoff} equations, respectively.

It has become recently understood that correspondences satisfying the local $(n-1)$-simplex equations lead to $n$-simplex maps \cite{Igonin-Sokor}, although, $n$-simplex maps are usually equivalent to the local $(n-1)$-simplex equation. Furthermore, in \cite{Sokor-Sasha} we showed how to construct Yang--Baxter maps that are equivalent to Darboux-matrix refactorisation problems, whereas in  \cite{Sokor-2020-2} we constructed tetrahedron maps by considering the local Yang--Baxter equation for certain Darboux  matrices.

In this paper, we combine and extend the ideas of the works \cite{Igonin-Sokor},  \cite{Sokor-Sasha} and \cite{Sokor-2020-2}. In particular, we replace the spectral parameter of the Darboux matrices that participate in matrix refactorisation problems by a function. Specifically, in the case  when, for a particular Darboux matrix, the local 1-simplex equation is equivalent to a Yang--Baxter map, the variation of the spectral parameter will lead to a correspondence that allows the construction of several new Yang--Baxter maps. On the other hand, as  it was shown in  \cite{Sokor-2020-2}, the Darboux matrices do not solve the local Yang--Baxter equation for arbitrary values of the spectral parameter, and the spectral parameter was set to zero. The variation of the spectral parameter will lead to correspondences that define new tetrahedron maps. 

To summarise, we list the new results  contained in this paper:
\begin{itemize}
    \item we present a counter example showing that a unique solution to the local 1-simplex equation is not necessarily a Yang--Baxter map;
    \item we present an approach for constructing new solutions to the 2- and 3-simplex equations starting from  the Lax representation of known solutions;
    \item  we derive new Yang--Baxter maps of Adler and NLS type and a Zamolodchikov tetrahedron map of NLS type;
    \item we prove Liouville integrability for some of the derived  maps.
\end{itemize}

This article is structured  as follows: in the next section, we explain what the Yang--Baxter and tetrahedron maps are, and also we define the equations that generate them (local 1- and 2-simplex equations). In section \ref{parameter variation}, we describe the method of variation of the spectral parameter and apply the method to the famous Adler map. We construct a new, Liouville integrable, Yang--Baxter map. In section \eqref{Yang--Baxter maps}, we apply the variation of the spectral parameter to the Darboux transformations for the NLS and the derivative NLS equations in order  to  derive new parametric Yang--Baxter maps. We prove that some of the derived Yang--Baxter maps are completely (Liouville) integrable. Section \ref{tetrahedron maps} deals with the construction of a new parametric tetrahedron map of NLS  type using again the variation of the spectral parameter. Finally,  in section \ref{conclusions}, we list some tasks for future work.

\section{Yang--Baxter and tetrahedron maps}\label{prel}
\subsection{Yang--Baxter and local 1-simplex equation}
Let $\mathcal{A}$ be a set, and $\mathcal{A}^n=\underbrace{\mathcal{A}\times \mathcal{A}\times\dots\times \mathcal{A}}_{n}$. A map $Y:\mathcal{A}^2\rightarrow \mathcal{A}^2$, i.e. $Y(x,y)=(u(x,y),v(x,y))$, is a Yang--Baxter map if it solves the equation \cite{Drinfeld}:
\begin{equation}\label{eq_YB}
Y^{12}\circ Y^{13}\circ Y^{23}=Y^{23}\circ Y^{13}\circ Y^{12}.
\end{equation}
This is  the set-theoretical 2-simplex (Yang--Baxter) equation.
Functions $Y_{ij}:\mathcal{A}^3\rightarrow\mathcal{A}^3$  are defined as:
$$
Y_{12}(x,y,z)=\big(u(x,y),v(x,y),z\big),~~
Y_{23}(x,y,z)=\big(x,u(y,z),v(y,z)\big),~~
Y_{13}(x,y,z)=\big(u(x,z),y,v(x,z)\big),
$$

Let $a$ and  $b$ be  complex parameters. A parametric Yang--Baxter map is a map $Y_{a,b}:(\mathcal{A}\times{\mathbb{C}})^2\rightarrow (\mathcal{A}\times{\mathbb{C}})^2$, i.e. 
\begin{equation}\label{pYB-def}
    Y_{a,b}((x,a),(y,b))=((u((x,a),(y,b)),a),(v((x,a),(y,b)),b)),\quad a,b\in\mathbb{C},
\end{equation}
if it  solves the following equation:
\begin{equation}\label{YB-param}
Y^{12}_{a,b}\circ Y^{13}_{a,c} \circ Y^{23}_{b,c}=
Y^{23}_{b,c}\circ Y^{13}_{a,c} \circ Y^{12}_{a,b},
\end{equation}
where $a,b,c\in\mathbb{C}$.

\begin{remark}\label{triviallity}\normalfont
    A parametric Yang--Baxter map can be seen as a non-parametric Yang--Baxter map, if one  considers its parameters  as variables. That is, a parametric  Yang--Baxter map \eqref{pYB-def} can be understood as the (non-parametric) Yang--Baxter map:
    $$
    Y((x_1,x_2),(y_1,y_2))=(u((x_1,x_2),(y_1,y_2)),x_1,v((x_1,x_2),(y_1,y_2)),y_2)
    $$
\end{remark}

Parametric Yang--Baxter maps  will be denoted in short as:
\begin{equation}\label{YB-map-par}
    Y_{a,b}(x,y)=\big(u_{a,b}(x,y),\,v_{a,b}(x,y)\big),\quad x,y\in \mathcal{A},
\quad a,b\in\mathbb{C},
\end{equation}

A matrix ${\rm L}_a(x)={\rm L}_a(x,a,\lambda)$,  $x\in\mathcal{A}$, $a,\lambda\in\mathbb{C}$, is called a Lax matrix \cite{Suris} for a (parametric) Yang--Baxter map if it satisfies the following matrix refactorisation problem
\begin{equation}\label{eq-Lax}
{\rm L}_a(u){\rm L}_b(v)={\rm L}_b(y){\rm L}_a(x),\quad \text{for any}\quad\lambda\in\mathbb{C},
\end{equation}
which is actually the local 1-simplex equation. Equation \eqref{eq-Lax} is a generator of Yang--Baxter maps \cite{Suris, Kouloukas}.

It is believed in the literature that if a map \eqref{YB-map-par} is equivalent to a matrix refactorisation problem \eqref{eq-Lax}, then it is a Yang--Baxter map. This is  not true.

\begin{counterexample}\label{counterexample}
    Let  $\mathcal{A}=\mathbb{C}^2$. The map $Y:\mathbb{C}^2\rightarrow \mathbb{C}^2$ given by
    \begin{align*}
        x_1\mapsto u_1&=\frac{(a+x_1)(b-a+y_1)+(x_2+y_1) \left[(b-a) x_1+(a+x_1)y_1\right]}{a+x_1+(b+x_1) (x_2+y_1)}\\
        x_2\mapsto u_2&=y_2-\frac{(a-b)(b+y_1)}{a+x_1+(b+x_1) (x_2+y_1)}\\
         y_1\mapsto v_1&=x_1+\frac{a-b}{1+x_2+y_1}\\
           y_2\mapsto v_2&=\frac{(a+x_1) (a-b+x_2)+(x_2+y_1)\left[(a-b) (x_1-y_1)+x_2(b+x_1)\right]}{a+x_1+(b+x_1) (x_2+y_1)},
    \end{align*}
    is equivalent to the 1-local Yang--Baxter equation
   $$
{\rm L}_a(u_1,u_2){\rm L}_b(v_1,v_2)={\rm L}_b(y_1,y_2){\rm L}_a(x_1,x_2),
$$
where ${\rm L}_a(x_1,x_2)=\begin{pmatrix}
    a+x_1x_2+x_1 & x_1\\
    x_2 &1
\end{pmatrix}.$ However, map $Y$ is not a Yang--Baxter map.
\end{counterexample}

\subsection{Zamolodchikov tetrahedron and local 2-simplex equation}
A map $T:\mathcal{A}^3\rightarrow \mathcal{A}^3$, i.e. $T(x,y,z)=(u(x,y,z),v(x,y,z),w(x,y,z))$, $x,y,z\in\mathcal{A}$,
is called a Zamolodchikov  (tetrahedron) map  if it solves the equation
\begin{equation}\label{Tetrahedron-eq}
    T^{123}\circ T^{145} \circ T^{246}\circ T^{356}=T^{356}\circ T^{246}\circ T^{145}\circ T^{123}.
\end{equation}
This is the 3-simplex or Zamolodchikov tetrahedron equation.
Functions $T^{ijk}:\mathcal{A}^6\rightarrow  \mathcal{A}^6$, $i,j=1,2,3,~i\neq j$, act on the $ijk$ terms of the product $\mathcal{A}^6$ as map $T$, and trivially on the others. For instance,
$$
T^{356}(x,y,z,r,s,t)=(x,y,u(z,s,t),r,v(z,s,t),w(z,s,t)).
$$

Let  $a$, $b$ and $c$ be  complex parameters. A map $T:(\mathcal{A}\times\mathbb{C})^3\rightarrow (\mathcal{A}\times\mathbb{C})^3$, namely $T:((x,a),(y,b),(z,c))\mapsto ((u((x,a),(y,b),(z,c)),a),(v((x,a),(y,b),(z,c)),b),(w((x,a),(y,b),(z,c)),c))$, is called a parametric tetrahedron map if it solves the (parametric) equation:
\begin{equation}\label{Par-Tetrahedron-eq}
    T^{123}_{a,b,c}\circ T^{145}_{a,d,e} \circ T^{246}_{b,d,f}\circ T^{356}_{c,e,f}=T^{356}_{c,e,f}\circ T^{246}_{b,d,f}\circ T^{145}_{a,d,e}\circ T^{123}_{a,b,c}.
\end{equation}
Parametric 3-simplex maps will be denoted in short as:
\begin{equation}\label{Par-Tetrahedron_map}
 T_{a,b,c}:(x,y,z)\mapsto (u_{a,b,c}(x,y,z),v_{a,b,c}(x,y,z),w_{a,b,c}(x,y,z)),
\end{equation}

Now, let ${\rm L}={\rm L}(x,k,\lambda)$, $x\in\mathcal{A}$,  $k,\lambda\in\mathbb{C}$ be a matrix of the form
\begin{equation}\label{matrix-L}
   {\rm L}(x,k,\lambda)= \begin{pmatrix} 
a(x,k,\lambda) & b(x,k,\lambda)\\ 
c(x,k,\lambda) & d(x,k,\lambda)
\end{pmatrix},
\end{equation}
for scalar functions $a, b, c$ and $d$ (in general, they can  be matrices \cite{Korepanov}). Let ${\rm L}_{ij}$, $i,j=1,2, 3$, given by
{\small
\begin{equation}\label{Lij-mat}
   {\rm L}_{12}=\begin{pmatrix} 
 a(x,k,\lambda) &  b(x,k,\lambda) & 0\\ 
c(x,k,\lambda) &  d(x,k,\lambda) & 0\\
0 & 0 & 1
\end{pmatrix},\quad
 {\rm L}_{13}= \begin{pmatrix} 
 a(x,k,\lambda) & 0 & b(x,k,\lambda)\\ 
0 & 1 & 0\\
c(x,k,\lambda) & 0 & d(x,k,\lambda)
\end{pmatrix}, \quad
 {\rm L}_{23}=\begin{pmatrix} 
   1 & 0 & 0 \\
0 & a(x,k,\lambda) & b(x,k,\lambda)\\ 
0 & c(x,k,\lambda) & d(x,k,\lambda)
\end{pmatrix},
\end{equation}
}
where ${\rm L}_{ij}={\rm L}_{ij}(x,k,\lambda)$, $i,j=1,2,3$.

Then, ${\rm L}={\rm L}(x,k,\lambda)$ is a Lax matrix for map \eqref{Par-Tetrahedron_map}, if the latter satisfies \cite{Dimakis-Hoissen-2015}
\begin{equation}\label{Lax-Tetra}
    {\rm L}_{12}(u,a,\lambda){\rm L}_{13}(v,b,\lambda){\rm L}_{23}(w,c,\lambda)= {\rm L}_{23}(z,c,\lambda){\rm L}_{13}(y,b,\lambda){\rm L}_{12}(x,a,\lambda),\quad \text{for any}\quad\lambda\in\mathbb{C}.
\end{equation}
This is the \textit{local Yang--Baxter} equation or Maillet--Nijhoff equation \cite{Nijhoff} in Korepanov's form. In this matrix  form equation \eqref{Lax-Tetra} was suggested by Korepanov, and deserves to be referred as \textit{Korepanov's equation}. The Korepanov equation allows one to construct tetrahedron maps related to integrable systems of mathematical  physics \cite{Kassotakis-Tetrahedron, Sokor-2022} and `classify' tetrahedron  maps \cite{Kashaev-Sergeev}.

\section{Variation of the spectral parameter}\label{parameter variation}
A  Lax matrix for an $n$-simplex equation, ${\rm L}={\rm L}(x,k,\lambda)$, often depends on a spectral parameter $\lambda$, and the associated local $(n-1)$-simplex problems must  be  satisfied for any $\lambda\in\mathbb{C}$. The approach of variation of the parameter $k$ in ${\rm L}$ has been considered by other authors in the literature \cite{Doliwa-2014, Kassotakis-Nonlinearity, Kassotakis-Kouloukas}. Here, we shall variate the spectral parameter. The spectral parameter overdetermines the system, and, considering it as a variable  (or a function of  variables), we can reduce the number of  equations of the associated system of polynomial equations, whereas  we increase the number of its variables, which allows us to construct new $n$-simplex maps.

For example, it is known that substitution of  the Lax matrix ${\rm L}(x,a,\lambda)=\begin{pmatrix}x  & 1\\ x^2+a+\lambda, & x\end{pmatrix}$ to equation \eqref{eq-Lax}, one can derive the  famous Adler  map \cite{Adler} 
$$
Y_{a,b}:(x,y)\longmapsto \left(y-\frac{a-b}{x+y},x+\frac{a-b}{x+y}\right),
$$
which is  a parametric Yang--Baxter map.

We change the spectral parameter to $\lambda\rightarrow x_1 x_2$, that is:
$$
   {\rm L}_a(x_1,x_2)\equiv {\rm L}(x_1,x_2,a)=\begin{pmatrix}x_1  & 1\\ x_1^2+x_1x_2+a, & x_1\end{pmatrix},
$$
and substitute  it to \eqref{eq-Lax}:
$$
    {\rm L}_a(u_1,u_2){\rm L}_b(v_1,v_2)={\rm L}_b(y_1,y_2){\rm L}_a(x_1,x_2).
$$
The above is equivalent  to the system:
\begin{eqnarray*}
  &b + v_1 (u_1 + v_1 + v_2)=a + x_1 (y_1 + x_1 + x_2),&\\
  &u_1+v_1=x_1+y_1,&\\
  &u_1 \left[b + v_1 (v_1 + v_2)\right]+\left[a + u_1 (u_1 + u_2)\right] v_1 =
  y_1 \left[a + x_1 (x_1 + x_2)\right]+\left[b + y_1 (y_1 + y_2)\right] x_1,&\\
  &a + u_1(1 + u_2 + v_1)=b + y_1(1 + y_2 + x_1),&
\end{eqnarray*}
which has two solutions for $u_i, v_i$, $i=1,2$. The first is given by the parametric family of maps $Y_1:\mathbb{C}^4\rightarrow \mathbb{C}^4$
\begin{equation}\label{Adler_type}
    (x_1,x_2,y_1,y_2)\stackrel{Y_1}{\longrightarrow}\left(y_1,y_2-\frac{a-b}{y_1},x_1,x_2+\frac{a-b}{x_1}\right),
\end{equation}
whereas the second is presented by the parametric family of maps $Y_2:\mathbb{C}^4\rightarrow \mathbb{C}^4$
\begin{align}\label{Adler_type-2}
    x_1\mapsto u_1 &=y_1 - \frac{a - b + x_1 x_2 - y_1 y_2}{x_1 + y_1},\nonumber\\
    x_2\mapsto u_2 &=\frac{-x_1 x_2 (x_1 + y_1)}{a - b + y_1 (x_1 + y_1) - x_1 x_2 + y_1  y_2},\\
    y_1\mapsto v_1 &=x_1 + \frac{a - b + x_1 x_2 - y_1 y_2}{x_1 + y_1},\nonumber\\
    y_2\mapsto v_2 &=\frac{y_1 (x_1 + y_1) y_2}{a - b + x_1 (x_1 + y_1) + x_1 x_2 - y_1 y_2}.
\end{align}

Map $Y_1$ given in \eqref{Adler_type} admits the first integrals:
$$
I_1=x_1+y_1,\quad I_2=x_1y_1,\quad I_3=x_1x_2+y_1y_2,
$$
which are functionally independent. Moreover, the invariants $I_1$ and $I_2$ are in  involution with respect to the Poisson bracket:
$$
\left\{x_1,x_2\right\}=\left\{y_1,y_2\right\}=1,\quad \left\{x_i,y_j\right\}=0,\quad i,j=1,2.
$$
That is, map \eqref{Adler_type} is completely integrable.

On the other hand, map  $Y_2$ given in \eqref{Adler_type-2} shares the first integrals $I_1$ and $I_3$ with map $Y_1$. However, its Liouville integrability is an open problem.

We apply this idea to NLS type Darboux matrices.

\section{Darboux transformations and Yang--Baxter maps}\label{Yang--Baxter maps}
Darboux transformations are convenient candidates for variating the spectral parameter. In \cite{Sokor-Sasha}, we used the following Darboux matrices:
\begin{equation}\label{D-NLS-DNLS}
    {\rm M}(x_1,x_2,a)=\begin{pmatrix}
        a+x_1x_2+\lambda & x_1\\
        x_2 & 1
    \end{pmatrix}\quad\text{and}\quad
         {\rm N}(x_1,x_2,a)=\begin{pmatrix}
        \lambda^2(a+x_1x_2) & \lambda x_1\\
        \lambda x_2 & 1
    \end{pmatrix},
\end{equation}
where $x_i,y_i$, $i=1,2$, are complex variables and $a,\lambda$ are complex parameters, associated with the  NLS and  the derivative NLS equation, respectively, and we constructed Yang--Baxter maps.  Here, we replace the spectral parameter of matrices ${\rm M}$ and  ${\rm N}$  in \eqref{D-NLS-DNLS} by functions, and we consider their matrix refactorisation problems \eqref{eq-Lax}.

\subsection{NLS type Yang--Baxter maps}
Substitution of matrix ${\rm M_a}(x_1,x_2)\equiv{\rm M}(x_1,x_2,a)$ of \eqref{D-NLS-DNLS} to \eqref{eq-Lax} leads to the derivation of the Adler--Yamilov map \cite{Sokor-Sasha}. If we replace $\lambda+a\rightarrow x_3$ in ${\rm M}$, then \eqref{eq-Lax} will be  equivalent to a correspondence. If  we supplement this correspondence with  equations  
$$
u_2=y_2,\quad  u_3=x_3,\quad\text{and}\quad v_3=y_3,
$$
then we obtain the Yang--Baxter map
\begin{equation}\label{y_2}
    Y:(x_1,x_2,x_3,y_1,y_2,y_3)\longmapsto \left(y_1 -\frac{x_3-y_3}{1+x_1y_2}x_1,y_2,x_3,x_1,x_2+\frac{x_3-y_3}{1+x_1y_2}y_2,y_3\right),
\end{equation}
which is a trivial example, since it can be obtained from  the Adler--Yamilov map \cite{Sokor-Sasha}, as explained in Remark \ref{triviallity}.

Now, replace $\lambda\rightarrow x_3$ in ${\rm M}$, and consider  the matrix refactorisation problem
\begin{equation}\label{Lax-NLS}
    {\rm M}_a(u_1,u_2,u_3){\rm M}_b(v_1,v_2,v_3)={\rm M}_b(y_1,y_2,y_3){\rm M}_a(x_1,x_2,x_3),\quad  {\rm M}_a(x_1,x_2,x_3)=\begin{pmatrix}
        x_1x_2+x_3+a & x_1\\
        x_2 & 1
    \end{pmatrix}.
\end{equation}

The matrix equation \eqref{Lax-NLS}  is  equivalent to the  following system  of polynomial equations
\begin{eqnarray*}
   & u_1 v_2 + (a + u_1 u_2 + u_3) (b + v_1 v_2 + v_3)= y_1 x_2 + (b + y_1 y_2 + y_3) (a + x_1 x_2 + x_3),&\\
   &u_1 + (a + u_1 u_2 + u_3) v_1=y_1 + (b + y_1 y_2 + y_3) x_1&\\
   &v_2 + u_2 (b + v_1 v_2 + v_3)=x_2 + y_2 (b + x_1 x_2 + x_3),\\
   &u_2 v_1 = x_1 y_2,&
\end{eqnarray*}
which can be solved for $u_3, v_1, v_2$ and $v_3$:
\begin{subequations}\label{cor-NLS}
\begin{align}
    u_3&=\frac{x_1 (bu_2-ay_2)+u_2 \left[(y_1 - u_1) (1 + x_1 y_2) + x_1 y_3\right]}{x_1 y_2},\\
    v_1&=\frac{x_1y_2}{u_2},\\
    v_2&= x_2 + \frac{x_3 (u_1 - y_1) y_2}{(u_1 - y_1) (1 + x_1 y_2) - x_1 (b+y_3)},\\
    v_3&= \frac{x_1 x_3 y_2 y_3}{u_2 \left[(u_1 - y_1) (1 + x_1 y_2) - x_1 (b+y_3)\right]}
\end{align}
\end{subequations}

Equations \eqref{cor-NLS} define a correspondence between $\mathbb{C}^6$  and  $\mathbb{C}^6$. In order to define a map,  we need to supplement \eqref{cor-NLS} with two equations. 

\begin{remark}\normalfont
    The couterexample \ref{counterexample} was obtained by replacing $\lambda\rightarrow x_1$  in ${\rm M_a}(x_1,x_2)\equiv{\rm M}(x_1,x_2,a)$ of \eqref{D-NLS-DNLS}.
\end{remark}

Now, we shall obtain new,  nontrivial, Yang--Baxter maps from \eqref{cor-NLS}. We have the following. 

\begin{theorem}
 Map $Y_3:\mathbb{C}^6\rightarrow \mathbb{C}^6$ given by
\begin{equation}\label{y_2}
    Y_3:(x_1,x_2,x_3,y_1,y_2,y_3)\longmapsto \left(y_1, \frac{ a + y_3}{b + y_3}y_2,y_3,\frac{b+y_3}{a+y_3}x_1,x_2,\frac{x_3(b+y_3)+(a-b)y_3}{a+y_3} \right)
\end{equation}
 and map $Y_4:\mathbb{C}^6\rightarrow \mathbb{C}^6$ given by
\begin{subequations}\label{y_3}
    \begin{align}
         x_1&\mapsto u_1=\frac{x_3 (a+x_3)y_1y_2+x_2 (x_3-y_3)\left[x_1 (b+y_3)+y_1 (1+x_1y_2)\right]}{x_3(a+x_3)y_2+x_2(1+x_1y_2)(x_2-y_3)},\\
         x_2&\mapsto u_2=\frac{x_3(a+x_3)y_2+x_2(1+x_1y_2)(x_2-y_3)}{x_3 (b+y_3)},\\
         x_3&\mapsto u_3=x_3,\\
         y_1&\mapsto v_1=\frac{x_1x_3y_2(b+y_3)}{x_3(a+x_3)y_2+x_2(1+x_1y_2)(x_2-y_3)},\\
         y_2&\mapsto v_2=\frac{x_2 y_3}{x_3},\\
         y_3&\mapsto v_3=y_3,
   \end{align}
\end{subequations}
are noninvolutive parametric Yang--Baxter maps which have a common Lax representation \eqref{Lax-NLS}. Moreover, map $Y_3$ has the functionally independent first integrals
\begin{equation}\label{y_2ints}
    I_1=x_1y_2+x_2y_1,\quad I_2=(a+x_1x_2+x_3)(b+y_1y_2+y_3)
\end{equation}
while  map $Y_4$ admits the functionally independent first integrals
\begin{equation}\label{y_3ints}
    I_1=x_1y_2+x_2y_1+(a+x_1x_2+x_3)(b+y_1y_2+y_3),\quad I_2=x_3y_3,\quad I_3=x_3+y_3.
\end{equation}
\end{theorem}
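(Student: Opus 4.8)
The plan is to verify the claimed properties of $Y_3$ and $Y_4$ directly, splitting the work into four tasks: (i) confirming that each map is obtained from the correspondence \eqref{cor-NLS} by adjoining two compatible closing equations, hence inherits the Lax representation \eqref{Lax-NLS}; (ii) checking the parametric Yang--Baxter relation \eqref{YB-param}; (iii) checking noninvolutivity; and (iv) verifying that the listed $I_j$ are first integrals and functionally independent.

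For task (i), I would start from the solved correspondence \eqref{cor-NLS}, which expresses $u_3,v_1,v_2,v_3$ in terms of $x_i,y_i$ and the still-free quantities $u_1,u_2$. To obtain $Y_3$ I would impose the two extra equations $u_1=y_1$ and $u_3=x_3$ (equivalently a suitable linear relation making $\lambda$ transform nicely); substituting into \eqref{cor-NLS} should collapse the rational expressions to the formulas in \eqref{y_2}, and one checks the closing equations are consistent with the remaining components of \eqref{cor-NLS}. For $Y_4$ I would instead impose $u_3=x_3$ together with the relation $v_2 = x_2y_3/x_3$ coming from the fourth component $u_2v_1=x_1y_2$ combined with a normalisation of $u_2$; again substitution into \eqref{cor-NLS} yields \eqref{y_3}. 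Because both maps are specialisations of \eqref{cor-NLS}, which by construction is equivalent to \eqref{Lax-NLS}, the common Lax representation is automatic once the closing equations are shown to be preserved by the refactorisation.

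Task (ii) is the heart of the argument. Here I would follow the standard Lax-matrix route: since $\mathrm{M}_a$ in \eqref{Lax-NLS} satisfies the local $1$-simplex equation \eqref{eq-Lax} for the full correspondence, any \emph{map} cut out of that correspondence satisfying the same refactorisation on the nose is a Yang--Baxter map by the generating property of \eqref{eq-Lax} (cf.\ \cite{Suris, Kouloukas}); the only gap is that, as Counterexample \ref{counterexample} warns, a closing of the correspondence need not be single-valued/consistent around the Yang--Baxter cube, so one must genuinely verify \eqref{YB-param}. Concretely I would compute both sides of \eqref{YB-param} as maps $\mathbb{C}^6\to\mathbb{C}^6$ (parameters $a,b,c$), i.e.\ compose $Y^{12}\circ Y^{13}\circ Y^{23}$ and $Y^{23}\circ Y^{13}\circ Y^{12}$ symbolically and check componentwise equality; this reduces to a finite set of rational-function identities that can be cleared to polynomial identities. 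For $Y_3$ this is short; for $Y_4$ it is a larger but routine computation best left to a computer algebra check.

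For task (iii), noninvolutivity, it suffices to exhibit one point where $Y_j\circ Y_j\neq\mathrm{id}$, or to observe directly from \eqref{y_2} that the second and fifth components rescale $y_2$ by $(a+y_3)/(b+y_3)$ and $x_1$ by its reciprocal, so iterating does not return to the identity unless $a=b$; similarly for \eqref{y_3} one checks $u_3=x_3$, $v_3=y_3$ are fixed but $u_2,v_2$ are not inverted by a second application. Finally, for task (iv) I would verify each $I_j$ is invariant by substituting \eqref{y_2} (resp.\ \eqref{y_3}) and simplifying — for $Y_3$, $I_1=x_1y_2+x_2y_1$ and $I_2$ being a product of the two ``determinant-like'' factors $a+x_1x_2+x_3$ and $b+y_1y_2+y_3$, which transform into each other's scalar multiples with reciprocal factors; for $Y_4$, $I_2=x_3y_3$ and $I_3=x_3+y_3$ are immediate since $u_3=x_3,v_3=y_3$, and $I_1$ requires one more simplification. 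Functional independence follows by computing the Jacobian of $(I_1,\dots)$ and exhibiting a nonvanishing maximal minor at a generic point.

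\textbf{Main obstacle.} The genuinely laborious step is verifying \eqref{YB-param} for $Y_4$: the rational expressions in \eqref{y_3} are bulky, so the composed maps on both sides produce large rational functions whose equality, while true, needs careful bookkeeping (clearing denominators and confirming the resulting polynomial identities). I expect to carry this out with symbolic computation and only report that the identity holds.
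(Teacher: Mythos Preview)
Your overall strategy matches the paper's: derive each map by adjoining two closing relations to the correspondence \eqref{cor-NLS}, then verify \eqref{YB-param} by direct substitution, check noninvolutivity on a component, and confirm the invariants. Task (ii)--(iv) are essentially identical to what the paper does (the paper also verifies \eqref{YB-param} by straightforward substitution and shows noninvolutivity by exhibiting one component, e.g.\ $u_3\circ Y_3\neq x_3$, that fails to return under a second application).

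There is, however, a concrete error in your task (i) for $Y_3$. You propose the closing pair $u_1=y_1$ and $u_3=x_3$. But reading off the third component of \eqref{y_2} one sees $u_3=y_3$, not $u_3=x_3$. If you impose $u_1=y_1$ in \eqref{cor-NLS}(a) you get $u_3=\dfrac{u_2(b+y_3)}{y_2}-a$, and setting this equal to $x_3$ forces $u_2=\dfrac{(a+x_3)\,y_2}{b+y_3}$ rather than the correct $u_2=\dfrac{(a+y_3)\,y_2}{b+y_3}$; the resulting map is \emph{not} $Y_3$. The paper instead closes with the pair
\[
u_3=y_3,\qquad v_2\,u_3=x_2\,y_3
\]
(the second becoming $v_2=x_2$ once $u_3=y_3$), and then solves the augmented system to produce \eqref{y_2}. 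Your choice $u_1=y_1$ is a legitimate alternative for one of the two relations (it is visibly true in \eqref{y_2} and, via \eqref{cor-NLS}(c), immediately forces $v_2=x_2$), but the second closing relation must be $u_3=y_3$, not $u_3=x_3$. With that correction your derivation of $Y_3$ goes through.

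For $Y_4$ your closing relations $u_3=x_3$ and $v_2=x_2y_3/x_3$ are exactly the paper's pair $u_3=x_3$, $v_2u_3=x_2y_3$, so that part is fine. Your invariant arguments (reciprocal scalar factors on the two ``determinant'' pieces for $I_2$ of $Y_3$; immediacy of $I_2,I_3$ for $Y_4$ from $u_3=x_3$, $v_3=y_3$) are correct and coincide with the paper's reasoning.
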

\begin{proof}
    For the correspondence \eqref{cor-NLS} to define a map we need two more (independent) equations which will define the two remaining free variables. Supplementing \eqref{cor-NLS} with equations
    $$
    u_3=y_3,\quad\text{and}\quad v_2u_3=x_2y_3,
    $$
    and solving the augmented system for  $u_i$ and $v_i$, $i=1,2,3$, we obtain map $Y_3$ given by \eqref{y_2}. 

    Then, we have that
    $$
    I_ i\circ Y_2 = I_i,\quad i=1,2,
    $$
    for $I_i$, $i=1,2$, given by \eqref{y_2ints}. That is, $I_i$, $i=1,2$, from \eqref{y_2ints} are invariants of the map $Y_2$. Moreover, it can be verified that $\nabla I_i$, $i=1,2$, are linearly independent, namely  $I_i$, $i=1,2$, are functionally independent.
    
    Moreover, supplementing system \eqref{cor-NLS} with equations
   $$
    u_3=x_3,\quad\text{and}\quad v_2u_3=x_2y_3,
    $$
    we obtain map $Y_4$ given  by \eqref{y_3}. The Yang--Baxter  equation can  be verified by substitution to \eqref{eq_YB}.

    Now, functions $I_2$ and $I_3$ are obvious first integrals of  $Y_4$, and $I_1\circ Y_3=I_1$. Moreover, matrix $(\nabla I_i)_i$ has rank 3, that is $I_i$, $i=1,\ldots,3$, from \eqref{y_3ints}, are  functionally independent.

    Finally,  
    $$
    u_3\circ Y_3=\frac{x_3(b+y_3)+(a-b)y_3}{a+y_3}\neq  x_3,\quad\text{and}\quad v_2\circ Y_4=\frac{\left[x_3 (a + x_3) y_2 + x_2 (1 + x_1 y_2) (x_3 - y_3)\right]y_3}{(b+y_3)x_3^2}\neq y_3,
    $$
    which means that $Y_i$, $i=3,4$, are noninvolutive.
\end{proof}

\begin{remark}\normalfont
    The first integrals \eqref{y_2ints} of map $Y_3$ are in involution with  respect to the Poisson bracket
    $$
    \{x_1,y_2\}=\{y_1,y_2\}=1,\quad\text{and all the rest}\quad\{x_i,x_j\}=\{y_i,y_j\}=\{x_i,y_i\}=0,
    $$
    which is invariant under $Y_3$. However, one more first integral must be found in order to claim Liouville integrability. The invariants \eqref{y_3ints} of map $Y_4$ indicate its integrability. Its Liouville integrability is an open problem.
\end{remark}

\subsection{Derivative NLS type Yang--Baxter maps}
Here, we change $\lambda\rightarrow x_3$ in ${\rm N}$ of \eqref{D-NLS-DNLS}, and we substitute it into the Lax  equation 
\begin{align}\label{Lax-DNLS}
    &{\rm N}_a(u_1,u_2,u_3){\rm N}_b(v_1,v_2,v_3)={\rm N}_b(y_1,y_2,y_3){\rm N}_a(x_1,x_2,x_3),\\
    &{\rm N}_a(x_1,x_2,x_3)\equiv {\rm N}(x_1,x_2,x_3,a)=\begin{pmatrix}
        x_3^2(a+x_1x_2) &  x_1x_3\\
         x_2 x_3 & 1
    \end{pmatrix}.
\end{align}
Equation \eqref{Lax-DNLS} is equivalent to the system of polynomial equations
\begin{eqnarray}\label{sysDNLS}
    &u_1 u_3 v_2 v_3 + (a + u_1 u_2) (b + v_1 v_2) u_3^2v_3^2 = 
  x_2 x_3 y_1 y_3 + (a + x_1 x_2)  (b + y_1 y_2) x_3^2 y_3^2,&\\ 
 &u_1 u_3 + (a + u_1 u_2) u_3^2 v_1 v_3 = y_1 y_3 + 
   (b + y_1 y_2) y_3^2x_1 x_3,& \\
   &v_2 v_3 + u_2 u_3 (b + v_1 v_2) v_3^2 = 
  x_2 x_3 + (a + x_1 x_2) x_3^2 y_2 y_3,&\\ 
  &u_2 u_3 v_1 v_3 = x_1 x_3 y_2 y_3.&
\end{eqnarray}

As  in the previous section, this system is under-determined and thus it defines a correspondence rather than a  map. Indeed, it can be solved  for $u_1,u_2,u_3,v_2$ and $v_1$, $v_3$ will be  free variables. In order to obtain a map, we need to supplement the system with two more equations.

We have the following.

\begin{theorem}
 Map $Y_5:\mathbb{C}^6\rightarrow \mathbb{C}^6$ given by
 \begin{equation}\label{y_4}
         Y_5:(x_1,x_2,x_3,y_1,y_2,y_3)\longmapsto \left(y_1 -\frac{(a-b)x_1 x_3 y_3}{1 + x_1 x_3 y_2 y_3} ,y_2,y_3,x_1,x_2+\frac{(a-b)x_3 y_2 y_3}{1 + x_1 x_3 y_2 y_3},x_3\right),
 \end{equation}   
 and map $Y_6:\mathbb{C}^6\rightarrow \mathbb{C}^6$ given by
\begin{subequations}\label{Y5}
    \begin{align*}
        x_1&\mapsto u_1=y_1-\frac{(a - b)x_1 x_3 y_3}{1 - x_3 (a - x_1 y_2) y_3},\\
        x_2&\mapsto u_2=\frac{1 - a x_3 y_3 + x_1 x_3 y_2 y_3}{1 - b x_3 y_3 + x_1 x_3 y_2 y_3}y_2,\\
        x_3&\mapsto u_3=y_3,\\
        y_1&\mapsto v_1=\frac{1 - b x_3 y_3 + x_1 x_3 y_2 y_3}{1 - a x_3 y_3 + x_1 x_3 y_2 y_3}x_1,\\
        y_2&\mapsto v_2=x_2+\frac{(a - b)x_3y_2y_3}{1 - x_3 (b - x_1 y_2) y_3},\\
        y_3&\mapsto v_3=x_3
    \end{align*}
\end{subequations}
are noninvolutive parametric Yang--Baxter maps which have a common Lax representation \eqref{Lax-NLS}. Moreover, map $Y_5$ admits the functionally independent first integrals
\begin{equation}\label{y_4ints}
I_1=  (x_1 y_2+x_2 y_1) x_3y_3 + (a + x_1 x_2)  (b + y_1 y_2) x_3^2 y_3^2,\quad I_2=x_3+y_3,\quad I_3=x_3y_3,
\end{equation}
whereas map $Y_6$ admits the functionally independent first integrals
\begin{equation}\label{Y5ints}
I_1=  (x_1 y_2+x_2 y_1) x_3y_3 + (a + x_1 x_2)  (b + y_1 y_2) x_3^2 y_3^2,\quad I_2=x_3+y_3,\quad I_3=x_3y_3,\quad I_4=x_1+y_1,\quad I_5=x_2+y_2.
\end{equation}
\end{theorem}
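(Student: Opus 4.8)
The plan is to follow the scheme of the preceding theorem. The system \eqref{sysDNLS} can be solved rationally for $u_1,u_2,u_3,v_2$, leaving $v_1,v_3$ free, so a map is obtained by adjoining two further relations that fix the two free variables. For $Y_5$ it suffices to impose $v_1=x_1$ and $v_3=x_3$: the fourth equation of \eqref{sysDNLS} then forces $u_2u_3=y_2y_3$, and solving the augmented system produces the Adler--Yamilov-like formulas \eqref{y_4} (in which $u_2=y_2$, $u_3=y_3$ as well). For $Y_6$ one instead adjoins $v_3=x_3$ together with the additivity relation $u_1+v_1=x_1+y_1$, and solving gives \eqref{Y5}. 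Before anything else one should record that the denominators appearing in \eqref{y_4}, \eqref{Y5} (namely $1+x_1x_3y_2y_3$, $1-ax_3y_3+x_1x_3y_2y_3$, $1-bx_3y_3+x_1x_3y_2y_3$) are not identically zero, so the two maps are well defined, and that they are birational.

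The computational heart of the proof is the Yang--Baxter identity \eqref{YB-param}, or equivalently \eqref{eq_YB} after promoting $a,b,c$ to coordinates in the sense of Remark \ref{triviallity}. The direct approach is to form the two triple compositions $Y^{12}_{a,b}\circ Y^{13}_{a,c}\circ Y^{23}_{b,c}$ and $Y^{23}_{b,c}\circ Y^{13}_{a,c}\circ Y^{12}_{a,b}$ on $\mathcal{A}^3$ with $\mathcal{A}=\mathbb{C}^3$, clear denominators, and check equality of the resulting polynomial sextuples with a computer algebra system. The structurally cleaner approach uses the common Lax matrix ${\rm N}$ of \eqref{Lax-DNLS} (with $\lambda\to x_3$): the triple refactorisation ${\rm N}_a{\rm N}_b{\rm N}_c={\rm N}_c{\rm N}_b{\rm N}_a$ is unambiguous, so both sides of the Yang--Baxter relation represent the same ordered matrix product, and it remains only to verify that the two relations adjoined to \eqref{sysDNLS} are compatible with the cubic composition. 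This compatibility is the genuine content of the verification, and I expect it to be the main obstacle for $Y_6$, all of whose coordinates are nontrivially transformed with nested denominators, as opposed to the mild shape of $Y_5$.

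For the first integrals, the trace of ${\rm N}_a(u){\rm N}_b(v)={\rm N}_b(y){\rm N}_a(x)$ is $I_1+1$ with $I_1$ as in \eqref{y_4ints}, whence $I_1\circ Y_5=I_1$ and $I_1\circ Y_6=I_1$ automatically; the determinant relation $u_3^2v_3^2=x_3^2y_3^2$ together with $u_3=y_3$, $v_3=x_3$ yields $I_2=x_3+y_3$ and $I_3=x_3y_3$. For $Y_6$ one checks in addition, directly from \eqref{Y5}, that $u_1+v_1=x_1+y_1$ (a defining relation) and that $u_2+v_2=x_2+y_2$, the latter because $u_2-y_2=-(v_2-x_2)=-\,(a-b)x_3y_2y_3/(1-bx_3y_3+x_1x_3y_2y_3)$; this gives $I_4$ and $I_5$. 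Functional independence is then established by exhibiting a nonvanishing $3\times3$ minor of the $3\times6$ Jacobian $(\nabla I_j)_j$ for $Y_5$, and a nonvanishing $5\times5$ minor of the $5\times6$ Jacobian for $Y_6$, which can be read off at a generic point (using the $x_3,y_3$ derivatives, and for $Y_6$ also the $x_1,x_2,y_1$ derivatives).

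Finally, noninvolutivity is disposed of exactly as in the previous proof: compute a single coordinate of $Y\circ Y$ and exhibit that it is not the identity, e.g. $v_2\circ Y_5\circ Y_5\neq y_2$ and $v_2\circ Y_6\circ Y_6\neq y_2$, one explicit rational expression in each case sufficing. Apart from the $Y_6$ Yang--Baxter compatibility noted above, every remaining step is a bounded, if laborious, rational computation.
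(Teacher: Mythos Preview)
Your proposal follows the paper's argument closely: the same supplementary equations ($v_1=x_1,\ v_3=x_3$ for $Y_5$; $u_1+v_1=x_1+y_1,\ v_3=x_3$ for $Y_6$), verification of \eqref{YB-param} by direct substitution, $I_1$ from the trace of ${\rm N}_b{\rm N}_a$, and noninvolutivity by squaring. The only point you overlook is that for $Y_5$ the augmented system actually has \emph{two} solutions, and one must discard the spurious branch (with $u_1=-y_1-\cdots$, $u_3=-y_3$), which is not a Yang--Baxter map; also note that your ``structurally cleaner'' Lax argument cannot replace the direct check, since Counterexample~\ref{counterexample} shows that unique refactorisation does not by itself imply the Yang--Baxter property.
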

\begin{proof}
If  we supplement system \eqref{sysDNLS} with equations
$$
v_1=x_1\quad\text{and}\quad v_3=x_3,
$$
then the associated augmented system admits two solutions, namely the following
$$
u_1=-y_1 -\frac{(a-b)x_1 x_3 y_3}{1 + x_1 x_3 y_2 y_3},\quad u_2=y_2,\quad u_3=-y_3,\quad v_1=x_1,\quad v_2=x_2+\frac{(a-b)x_1 x_3 y_3}{1 + x_1 x_3 y_2 y_3},\quad v_3=x_3,
$$
which  is not a Yang--Baxter map, and also the solution given  by map \eqref{y_4} which is a parametric Yang--Baxter map that  can  be readily verified by substitution to equation \eqref{YB-param}.

Furthermore, if  we supplement system \eqref{sysDNLS} with equations
$$
u_1+v_1=x_1+y_1,\quad v_3=x_3,
$$
then the corresponding  augmented system has a  unique  solution given by $Y_6$ in \eqref{Y5}. It can  be proved that  $Y_5$ is a parametric Yang--Baxter map by substitution to \eqref{YB-param}.

Maps $Y_5$ and $Y_6$ share the first integral $I_1$ which follows from the trace of the monodromy matrix, $\tr({\rm N}_b(y_1,y_2,y_3){\rm N}_a(x_1,x_2,x_3))$. The rest of invariants are obvious.

Noninvolutivity follows from the fact that $Y_i\circ Y_5\neq\id$, $i=5,6$.
\end{proof}

The invariants of $Y_5$ and $Y_6$ indicate their integrability. 

\begin{theorem}
    Map $Y_6$ is completely (Liouville) integrable.
\end{theorem}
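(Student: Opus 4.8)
The plan is to put on $\mathbb{C}^6$ an invariant Poisson bracket of rank two for which $Y_6$ is a Poisson map and whose Casimir functions are (functionally equivalent to) $I_2,I_3,I_4,I_5$ of \eqref{Y5ints}; Liouville integrability of $Y_6$ then follows from a dimension count together with the functional independence already recorded in the preceding theorem.

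To guess the bracket, observe that $Y_6$ acts on the third coordinates by the transposition $u_3=y_3$, $v_3=x_3$, and that $u_1+v_1=x_1+y_1$ (one of the two equations appended to \eqref{sysDNLS} to single out $Y_6$) and $u_2+v_2=x_2+y_2$ (checked directly from \eqref{Y5}). This suggests demanding that $x_3$, $y_3$, $x_1+y_1$ and $x_2+y_2$ be central, which forces, up to the choice of a function $\omega=\omega(x_3,y_3)$ (possibly a constant),
\[
\{x_1,x_2\}=\{y_1,y_2\}=-\{x_1,y_2\}=-\{y_1,x_2\}=\omega,\qquad \{x_3,\,\cdot\,\}=\{y_3,\,\cdot\,\}=0 .
\]
This is a genuine Poisson bracket; its structure matrix has rank two, with kernel spanned by $dx_3$, $dy_3$, $d(x_1+y_1)$, $d(x_2+y_2)$, so $I_2,\ldots,I_5$ are Casimirs and the generic symplectic leaf is two-dimensional. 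It then remains to fix $\omega$ and to verify that $Y_6$ preserves the bracket, i.e. that $\{u_i,u_j\}$, $\{u_i,v_j\}$ and $\{v_i,v_j\}$ reproduce the same table; the relations involving $u_3=y_3$ or $v_3=x_3$ are immediate, and the identity $u_2v_1=x_1y_2$, which follows from the last equation of system \eqref{sysDNLS} once $u_3=y_3$ and $v_3=x_3$ are known, helps to organise the rest.

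Granting that $Y_6$ is Poisson, the conclusion is immediate. The phase space has dimension $m=6$ and the Poisson structure has rank $2r=2$, so complete integrability of the map requires $m-r=5$ functionally independent invariants in involution. These are $I_1,\ldots,I_5$: the functions $I_2,\ldots,I_5$ are Casimirs, hence Poisson-commute with one another and with $I_1$, while $\{I_1,I_1\}=0$ trivially, and $I_1,\ldots,I_5$ are functionally independent by the preceding theorem. Hence $Y_6$ is completely (Liouville) integrable.

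The only place where genuine work is needed is the verification that the rational map $Y_6$ of \eqref{Y5} is a Poisson map for the above bracket; this is a direct but lengthy computation in the explicit coordinates, and it simultaneously pins down $\omega$ (I expect $\omega$ to be a constant). A more structural alternative, if one can set it up, is to recognise the bracket as the one induced on the refactorisation variables by a Sklyanin-type ($r$-matrix) Poisson algebra for the Lax matrix ${\rm N}_a$ of \eqref{Lax-DNLS}: then $I_1$ is, up to an additive constant, $\tr\bigl({\rm N}_b(y){\rm N}_a(x)\bigr)$ and is automatically an invariant in involution, while the Casimirs are dictated by $\det{\rm N}_a=a\,x_3^2$; such a presentation would bypass the brute-force check.
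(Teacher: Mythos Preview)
Your approach is essentially the paper's: the paper fixes exactly your bracket with the constant $\omega=-1$, i.e.\ $\{x_1,x_2\}=\{y_1,y_2\}=-1$, $\{x_1,y_2\}=\{y_1,x_2\}=1$, all other brackets zero, asserts (without displaying the computation) that it is invariant under $Y_6$, observes that the Poisson matrix has rank two with Casimirs $I_2,\dots,I_5$, and concludes integrability. Your write-up is in fact more careful than the paper's about the dimension count $m-r=5$ and the role of $I_1$ as the one non-Casimir invariant on the two-dimensional leaves; the paper's proof leaves this implicit.
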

\begin{proof}
     The Poisson bracket:
    $$
    \{x_1,x_2\}=\{y_1,y_2\}=-1,\quad\{x_1,y_2\}=\{y_1,x_2\}=1,
    $$
    where all the rest entries of the associated Poisson matrix are 0, is invariant under the map $Y_6$. The corresponding $6\times 6$ Poisson matrix ${\rm P}$ has rank 2, and $I_i$, $i=2,3,4,5$, are 4 Casimir  functions, since  $(\nabla I_i)\cdot {\rm  P}=0$, $i=2,3,4,5$. Thus, map $Y_2$ is completely integrable in the Liouville  sense.
\end{proof}

\section{Darboux transformations and Zamolodchikov maps}\label{tetrahedron maps}
In \cite{Sokor-2020-2} we constructed Zamolodchikov 3-simplex maps by substituting matrices \eqref{D-NLS-DNLS} into \eqref{Lax-Tetra}. In that construction, we had fixed the  spectral parameter $\lambda =0$, since  matrices \eqref{D-NLS-DNLS} do not solve equation \eqref{Lax-Tetra} for $\lambda\neq 0$. This, possibly, has  to do  with  the  fact that Zamolodchikov tetrahedron maps are related  to 3D lattice equations which  are discrete versions of $(2+1)$-integrable PDEs, where the  spectral parameter is usually absent. 

Here, instead of fixing $\lambda =0$, we shall variate it. Specifically, after changing in \eqref{D-NLS-DNLS} $\lambda\rightarrow x_3$, namely we set ${\rm M}^a(x_1,x_2,x_3)=\begin{pmatrix} x_1x_2+x_3+a & x_1\\ x_2 & 1\end{pmatrix}$, we consider the  $3\times  3$  generalisations of ${\rm M}$:
\begin{align*}
      {\rm M}^a_{12}(x_1,x_2,x_3)&=\begin{pmatrix} 
 x_1x_2+x_3+a &  x_1 & 0\\ 
x_2 &  1 & 0\\
0 & 0 & 1
\end{pmatrix},\\
 {\rm M}^a_{13}(x_1,x_2,x_3)&= \begin{pmatrix} 
 x_1x_2+x_3+a & 0 & x_1\\ 
0 & 1 & 0\\
x_2 & 0 & 1
\end{pmatrix}, \\
 {\rm M}^a_{23}(x_1,x_2,x_3)&=\begin{pmatrix} 
   1 & 0 & 0 \\
0 & x_1x_2+x_3+a & x_1\\ 
0 & x_2 & 1
\end{pmatrix},
\end{align*}
and substitute them  to equation
$$
{\rm M}^a_{12}(u_1,u_2,u_3){\rm M}^b_{13}(v_1,v_2,v_3){\rm M}^c_{23}(w_1,w_2,w_3)= {\rm M}^c_{23}(z_1,z_2,z_3){\rm M}^b_{13}(y_1,y_2,y_3){\rm M}^a_{12}(x_1,x_2,x_3).
$$

The latter is equivalent to the system
\begin{subequations}\label{tetra-corr}
    \begin{eqnarray}
    &(a + u_1 u_2 + u_3) (b + v_1 v_2 + v_3) = (a + x_1 x_2 + x_3) (b + y_1 y_2 + 
     y_3),&\\
     &(a + u_1 u_2 + u_3) v_1 w_2 + u_1 (c + w_1 w_2 + w_3) = 
  x_1 (b + y_1 y_2 + y_3),& \\
  &u_2 (b + v_1 v_2 + v_3) = (a + x_1 x_2 + x_3) y_2 z_1 + x_2 (c + z_1 z_2 + z_3),&\\ 
  &(a + u_1 u_2 + u_3) v_1 + u_1 w_1 =  y_1,&\\
  &(u_2 v_1 + w_1) w_2 + w_3 = (x_1 y_2 + z_2) z_1 + z_3,&\\
 &u_2 v_1 + w_1 = z_1,&\\
 &v_2 = (a + x_1 x_2 + x_3) y_2 + x_2 z_2,&\\
 &w_2 = x_1 y_2 + z_2,&
\end{eqnarray}
\end{subequations}
which can be solved for $u_1,u_2,v_1,v_2,v_3,w_1,w_2,w_3$ in terms  of $u_3$, namely it  defines a correspondence between $\mathbb{C}^9$ and $\mathbb{C}^9$. We omit writing the correspondence explicitly because of its length. However, in order to obtain  a  map, we need one more equation for  $u_3$.

\begin{theorem}
    Map $T_{a,b,c}:\mathbb{C}^9\rightarrow\mathbb{C}^9$ given  by
    \begin{subequations}\label{tetraNLS}
    \begin{align}
        x_1&\mapsto u_1=\frac{x_1 (b + y_3) - y_1 z_2}{c + z_3},\\
        x_2&\mapsto u_2=\frac{(a + y_3) (c + z_3) \left[(a + x_1 x_2 + x_3) y_2 z_1 + (c + z_1 z_2 + z_3) x_2\right]}{\mathcal{A}},\\
        x_3&\mapsto u_3=y_3,\\
        y_1&\mapsto v_1=\frac{y_1 (c + z_1 z_2 + z_3)-(b + y_3) x_1z_1 }{(a + y_3) (c + z_3)},\\
        y_2&\mapsto v_2 = (a + x_1 x_2 + x_3) y_2 + x_2 z_2,\\
        y_3&\mapsto v_3=\frac{b (x_3 - y_3) + (a + x_3) y_3}{a + y_3},\\
        z_1&\mapsto w_1=\frac{(c+z_3) \left[(a+x_1x_2+x_3)(b+y_3)z_1-(c+z_1z_2+z_3)x_2y_1\right]}{\mathcal{A}},\\
        z_2&\mapsto w_2=x_1 y_2 + z_2,\\
        z_3&\mapsto w_3=z_3,\label{tetraNLS-i}
    \end{align}
    \end{subequations}
    where
    \begin{align*}
        \mathcal{A}=&a b c + b c x_3 + a c y_1 y_2 + c x_1 x_2 y_1 y_2 + c x_3 y_1 y_2 + a c y_3 +c x_3 y_3 - a b x_1 y_2 z_1 - b x_1^2 x_2 y_2 z_1 -\\
         & b x_1 x_3 y_2 z_1 - a x_1 y_2 y_3 z_1 -x_1^2 x_2 y_2 y_3 z_1 - x_1 x_3 y_2 y_3 z_1 + c x_2 y_1 z_2 -  b x_1 x_2 z_1 z_2 + a y_1 y_2 z_1 z_2 + \\
         &x_1 x_2 y_1 y_2 z_1 z_2 + x_3 y_1 y_2 z_1 z_2 - 
         x_1 x_2 y_3 z_1 z_2 + x_2 y_1 z_1 z_2^2 + a b z_3 + b x_3 z_3 + a y_1 y_2 z_3 + \\
 & x_1 x_2 y_1 y_2 z_3 + x_3 y_1 y_2 z_3 +a y_3 z_3 + x_3 y_3 z_3 + x_2 y_1 z_2 z_3,
    \end{align*}
    is  a parametric Zamolodchikov tetrahedron map, and admits  the following functionally independent first integrals
    $$
    I_1=(a + x_1 x_2 + x_3) (b + y_1 y_2 + y_3),\quad I_2=(a+x_3) (b+y_3),\quad I_3=z_3.
    $$
\end{theorem}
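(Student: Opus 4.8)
The plan is to verify the three required properties in turn: that $T_{a,b,c}$ solves the correspondence system \eqref{tetra-corr}, that it is a genuine map (not merely a correspondence), and that it satisfies the parametric tetrahedron equation \eqref{Par-Tetrahedron-eq}; then to check the three proposed invariants. First I would take the correspondence obtained by solving \eqref{tetra-corr} for $u_1,u_2,v_1,v_2,v_3,w_1,w_2,w_3$ in terms of the free variable $u_3$, and impose the single extra closure relation $u_3=y_3$ (this is the ``one more equation for $u_3$'' promised just before the theorem, and the natural analogue of the closure relations $u_3=y_3$, $v_3=x_3$ used in the NLS and derivative-NLS cases). Substituting $u_3=y_3$ back into the solved correspondence collapses the free parameter and yields, after simplification, exactly formulas \eqref{tetraNLS}; the denominator $\mathcal{A}$ is precisely the polynomial arising when one clears fractions in the expressions for $u_2$ and $w_1$. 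This step is bookkeeping: it certifies that \eqref{tetraNLS} is well-defined as a rational map $\mathbb{C}^9\dashrightarrow\mathbb{C}^9$ and that, by construction, it satisfies the Korepanov/local Yang--Baxter equation \eqref{Lax-Tetra} for the matrix ${\rm M}^a(x_1,x_2,x_3)$ with $\lambda$ replaced by $x_3$, for all $\lambda\in\mathbb{C}$ (here the spectral parameter has been absorbed into $x_3$, so the ``for any $\lambda$'' condition is vacuous in the usual sense and the single matrix identity suffices).

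Next I would establish the tetrahedron property. The clean route is to invoke the principle, used repeatedly in the excerpt (and going back to Korepanov, Kashaev--Sergeev, Nijhoff), that a solution of the local Yang--Baxter equation \eqref{Lax-Tetra} automatically produces a solution of the 3-simplex equation \eqref{Tetrahedron-eq}, provided the relevant matrix refactorisations are unique; one then checks that the closure relation $u_3=y_3$ is compatible with this propagation, i.e.\ that it is preserved along both sides of \eqref{Par-Tetrahedron-eq}. Concretely, one writes out both composition chains $T^{123}_{a,b,c}\circ T^{145}_{a,d,e}\circ T^{246}_{b,d,f}\circ T^{356}_{c,e,f}$ and $T^{356}_{c,e,f}\circ T^{246}_{b,d,f}\circ T^{145}_{a,d,e}\circ T^{123}_{a,b,c}$ acting on a generic point of $\mathbb{C}^{18}$ (six copies of $\mathcal{A}\times\mathbb{C}$), and verifies the two resulting $18$-tuples of rational functions agree. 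As the theorem statement itself suggests (``The Yang--Baxter equation can be verified by substitution''), the honest fallback is direct symbolic substitution into \eqref{Par-Tetrahedron-eq} on a computer algebra system; I would present this as the verification, since the map is large enough that a by-hand proof is impractical.

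For the first integrals: $I_3=z_3$ is immediate from $w_3=z_3$ in \eqref{tetraNLS-i}, and $I_2=(a+x_3)(b+y_3)$ follows directly from $u_3=y_3$ together with $v_3=\dfrac{b(x_3-y_3)+(a+x_3)y_3}{a+y_3}$, since $(a+u_3)(b+v_3)=(a+y_3)\cdot\dfrac{b(x_3-y_3)+(a+x_3)y_3+b(a+y_3)-b(a+y_3)}{a+y_3}$ simplifies to $(a+x_3)(b+y_3)$ after collecting terms. The invariant $I_1=(a+x_1x_2+x_3)(b+y_1y_2+y_3)$ is exactly the content of the first equation of the system \eqref{tetra-corr}, namely $(a+u_1u_2+u_3)(b+v_1v_2+v_3)=(a+x_1x_2+x_3)(b+y_1y_2+y_3)$, so $I_1\circ T_{a,b,c}=I_1$ holds by construction; it is the natural ``trace-type'' invariant coming from $\det$ or $\tr$ of the monodromy. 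Functional independence of $I_1,I_2,I_3$ I would settle by exhibiting the $3\times 9$ Jacobian $(\nabla I_i)_i$ and checking it has rank $3$ — for instance the columns corresponding to $x_1,x_3,z_3$ already give a nonsingular $3\times 3$ minor for generic values.

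The main obstacle I expect is the tetrahedron equation itself: even granting the general ``local $\Rightarrow$ global'' heuristic, one must check that the closure $u_3=y_3$ does not break the propagation argument, and the direct symbolic check involves composing four copies of a fairly intricate rational map with the bulky denominator $\mathcal{A}$ — a computation that is routine in principle but heavy enough that it must be delegated to a symbolic algebra system, and one must be careful that intermediate denominators do not vanish identically (i.e.\ that the compositions are defined on a dense open set). Everything else — the derivation of \eqref{tetraNLS} from \eqref{tetra-corr}, and the three invariants — is short and essentially algebraic.
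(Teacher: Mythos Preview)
Your proposal is correct and matches the paper's own proof essentially step for step: supplement the correspondence \eqref{tetra-corr} with $u_3=y_3$ to obtain the unique solution \eqref{tetraNLS}, verify the parametric tetrahedron equation \eqref{Par-Tetrahedron-eq} by direct substitution, read off $I_1$ from the first equation of \eqref{tetra-corr}, compute $I_2$ from $u_3,v_3$, take $I_3$ from \eqref{tetraNLS-i}, and check rank~$3$ of $(\nabla I_i)_i$. One caution: the paper does not invoke any ``local $\Rightarrow$ global'' principle here---indeed its own Counterexample~\ref{counterexample} shows that unique solvability of a local simplex equation need not yield an $n$-simplex map---so your ``honest fallback'' of symbolic verification is in fact the only argument offered, and you should present it as the proof rather than as a fallback.
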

\begin{proof}
    We supplement system \eqref{tetra-corr} with equation $u_3=y_3$, and the corresponding augmented system has a unique  solution given by \eqref{tetraNLS}. This is a parametric  parametric Zamolodchikov tetrahedron map, which  by be checked with  straightforward substitution to \eqref{Par-Tetrahedron-eq}.

    Regarding  the first integrals, we have that:
    \begin{eqnarray*}
    &(a + u_1 u_2 + u_3) (b +v_1 v_2 + v_3)\stackrel{\eqref{tetraNLS}}{=} (a + x_1 x_2 + x_3) (b + y_1 y_2 + y_3)&\\
    &(a+u_3) (b+v_3)\stackrel{\eqref{tetraNLS}}{=} (a+x_3) (b+y_3)&\\
    &w_3 \stackrel{\eqref{tetraNLS-i}}{=} z_3,&
    \end{eqnarray*}
    and matrix $(\nabla I_i)_i$ has rank 3, therefore $I_i$, $i=1,2,3$, are functionally independent.
\end{proof}

\begin{remark}\normalfont
In matrix ${\rm M}^a(x_1,x_2,x_3)=\begin{pmatrix} x_1x_2+x_3+a & x_1\\ x_2 & 1\end{pmatrix}$, which we used to construct the parametric Zamolodchikov tetrahedron map \eqref{tetraNLS}, we kept the parameter $a$. At first glance, it seems that the dependence on parameter $a$ is trivial, since we could change $x_3\rightarrow a+x_3$. That would  mean that in the derived tetrahedron map one could eliminate the parameters $a,b$ and  $c$,  by changing in $T_{a,b,c}$
$$
(x_3,y_3,z_3)\rightarrow (x_3-a, y_3-b, z_3-c)\quad\text{and}\quad (u_3,v_3,w_3)\rightarrow (u_3-a, v_3-b, w_3-c).
$$
However, this  is  not the case for map \eqref{tetraNLS}, because we supplemented the correspondence, which is equivalent to system \eqref{tetra-corr}, with the equation $u_3=y_3$ that breaks this symmetry.
\end{remark}

\section{Discussion}\label{conclusions}
We presented a method for constructing hierarchies of 2- and 3-simplex maps by variating the spectral parameter in their Lax representations. We used this method to construct new such maps.

In particular, we constructed an Adler type Yang--Baxter maps $Y_1$ (map \eqref{Adler_type})  and $Y_2$ (map \eqref{Adler_type-2}), two NLS type Yang--Baxter maps $Y_3$ (map \eqref{y_2}) and $Y_4$ (map \eqref{y_3}), two derivative  NLS  type parametric Yang--Baxter maps $Y_5$ (map \eqref{y_4}) and  $Y_6$ (map \eqref{Y5}), and a parametric Zamolodchikov tetrahedron map $T_{a,b,c}$ (map \eqref{tetraNLS}).  We proved that  maps $Y_1$ and $Y_6$ are completely (Liouville) integrable.

Our results can be extended  in the following  ways:

\begin{enumerate}
    \item Complete integrability of $Y_2$, $Y_3$, $Y_4$, $Y_5$ and $T_{a,b,c}$.

    For map $Y_3$ one needs one more invariant. Maps $Y_2$, $Y_4$ and $Y_5$ have enough first integrals for  integrability aims. For their Liouville integrability one needs to find a  certain Poisson bracket. For map $T_{a,b,c}$ two more functionally independent first integrals are  needed.
    
    \item Noncommutative versions of all  the  derived maps.
    
    Noncommutative versions and extensions of $n$-simplex maps is a hot topic for  research (see,  e.g. \cite{Giota-Miky, Doliwa-2014,  Doliwa-Kashaev, GKM, Kassotakis-Nonlinearity, Kassotakis-Kouloukas, Sokor-2022},  and the reference therein). One could construct noncommutative versions of all the derived maps assuming that the variables in their Lax matrices are elements of a noncommutative division ring. In  order to prove that the maps are indeed $n$-simplex maps we need to  use certain matrix  refactorisation properties \cite{Sokor-2022, Sokor-Nikitina,  Kouloukas}. The same idea can  be applied to all the Grassmann extended Yang--Baxter maps presented  in \cite{Sokor-2020, Sokor-Kouloukas, Sokor-Sasha-2}.

    \item Extend to $4$-simplex maps.
    
    Using  the methods of \cite{SKR-2023-PhysD, SKR-JHEP}, one  could  extend all  the derived maps to higher members of  the family of  $n$-simplex equations.

    \item Construction of $n$-gon maps.

    The $n$-simplex maps are related to the solutions  of  the $n$-gon  equation \cite{EKT-2024,  Korepanov-Kolmogorov, Kassotakis-Proc}. One  could use  the matrices of  this paper as potential  generators of solutions to the pentagon equation \cite{Drinfeld-pentagon}.
\end{enumerate}

\section*{Acknowledgements}
The work on  sections 2, 3 and 4 was funded  by the Russian Science Foundation, project No. 20-71-10110 (https://rscf.ru/en/project/23-71-50012/). The work on sections 1, 5 and 6 was supported by the Russian Ministry of Science and Higher Education (Agreement No. 075-02-2024-1442). I would like to thank Pavlos Kassotakis for several useful conversations.

\section*{Conflict of Interest}
The authors declare that they have no conflicts of interest.


\begin{thebibliography}{10}
\bibitem{Kassotakis-Tetrahedron}
{P. Kassotakis, M. Nieszporski, V. Papageorgiou, A. Tongas.} {Tetrahedron maps and symmetries of three dimensional integrable discrete equations.} {J. Math. Phys. 60 (2019)} {123503}.

\bibitem{Pap-Tongas}
{V.G. Papageorgiou, A.G. Tongas.} {Yang--Baxter maps and multi-field integrable lattice equations.} {J. Phys. A 40 (2007)} {12677--12690}.

\bibitem{Pap-Tongas-Veselov}
{V.G. Papageorgiou, A.G. Tongas, A.P. Veselov.} {Yang--Baxter maps and symmetries of integrable equations on quad-graphs.} {J. Math. Phys. 47 (2006)} {083502}.

\bibitem{Pavlos-Maciej}
{P. Kassotakis and M. Nieszporski.} {Families of integrable equations.} {SIGMA 7 (2011) 100} {14pp}.

\bibitem{Pavlos-Maciej-2}
{P. Kassotakis and M. Nieszporski.} {On non-multiaffine consistent-around-the-cube lattice equations.} {Phys. Lett. A 376 (2012)} {3135--3140}.

\bibitem{Giota-Miky}
{P. Adamopoulou and G. Papamikos,} {Entwining Yang--Baxter maps over Grassmann algebras.} {Physica D: Nonlinear Phenomena \textbf{472} (2025)} {134469}.

\bibitem{GKM}
{G.G. Grahovski, S. Konstantinou-Rizos, A.V. Mikhailov.} {Grassmann extensions of Yang--Baxter maps.} {J. Phys. A} {49 (2016)} {145202}.

\bibitem{Sokor-2020}
{S. Konstantinou-Rizos.}
{On the 3D consistency of a Grassmann extended lattice Boussinesq system.} 
{\em Nuclear Phys. B } {\textbf{951}} (2020) {114878}.


\bibitem{Sokor-Sasha}
{S. Konstantinou-Rizos and A.V. Mikhailov.}
{Darboux transformations, finite reduction groups and related Yang--Baxter maps.} 
{\em J. Phys. A: Math. Theor.} {\textbf{46}} (2013) {425201} 

\bibitem{Papamikos-JP-Sasha}
{A.V. Mikhailov, G. Papamikos, and J.P. Wang.} {Darboux transformation for the vector sine-Gordon equation and integrable equations on a sphere. Lett. Math. Phys., 106:973--996, 2016.}

\bibitem{BIKRP}
V.M. Buchstaber, S. Igonin,  S. Konstantinou-Rizos and M.M. Preobrazhenskaia. Yang--Baxter maps, Darboux transformations, and linear approximations of refactorisation problems.
\emph{J. Phys. A: Math. Theor.} \textbf{53} (2020) 504002.

\bibitem{IKKRP}
S.~Igonin, V.~Kolesov, S.~Konstantinou-Rizos, and M.M.~Preobrazhenskaia.
Tetrahedron maps, Yang-Baxter maps, and partial linearisations.
\emph{J. Phys. A: Math. Theor.} \textbf{54} (2021) 505203.

\bibitem{Suris}
{Y.B. Suris, A.P. Veselov.} 
{Lax matrices for  Yang--Baxter maps.} {J. Nonlinear Math. Phys., 10 (2003)} {223--230}.

\bibitem{Kashaev-Sergeev} 
{R.M. Kashaev, I.G. Koperanov, and S.M. Sergeev,} {Functional Tetrahedron Equation.} {Theor. Math. Phys.} {117 (1998)} {370--384}.

\bibitem{Nijhoff}
{J.M. Maillet, F. Nijhoff.} {The tetrahedron equation and the four-simplex equation.} {Phys. Lett. A 134 (1989)} {221--228}.

\bibitem{Igonin-Sokor}
{S. Igonin and S. Konstantinou-Rizos.
Local Yang--Baxter correspondences and set-theoretical solutions to the Zamolodchikov tetrahedron equation. 
\emph{J. Phys. A: Math. Theor. \textbf{56}} 275202 (2023).}

\bibitem{Sokor-2020-2}
{S. Konstantinou-Rizos.}
{Nonlinear Schr\"odinger type tetrahedron maps.} 
{\em Nuclear Phys. B } {\textbf{960}} (2020) {115207} 

\bibitem{Drinfeld}
 {V. Drinfeld.} {On some unsolved problems in quantum group theory. Lecture Notes
in Math. 1510, 1--8 (1992)}.

\bibitem{Kouloukas}
{T.E. Kouloukas and V.G. Papageorgiou} {Yang--Baxter maps with
  first-degree-polynomial {$2\times 2$} Lax matrices.} {J. Phys. A} {\textbf{42}} {404012} (2009)

\bibitem{Korepanov}
{I.G. Korepanov.} {Algebraic integrable dynamical systems, $2+1$-dimensional models in wholly discrete space-time, and inhomogeneous models in 2-dimensional statistical physics,} {(1995) solv-int/9506003.}

\bibitem{Dimakis-Hoissen-2015}
{A. Dimakis and F. M\"uller-Hoissen,} {Simplex and polygon equations.} {SIGMA 11 (2015)} {042}.

\bibitem{Sokor-2022}
{S. Konstantinou-Rizos.}
{Noncommutative solutions to Zamolodchikov's tetrahedron equation and matrix six-factorisation problems.} 
{Physica D} {440} (2022) {133466}.

\bibitem{Doliwa-2014}
{A. Doliwa,} {Non-Commutative Rational Yang--Baxter Maps.} {Lett. Math. Phys.} {104 (2014)} {299--309}.

\bibitem{Kassotakis-Nonlinearity}
{P. Kassotakis,}
{Non-Abelian hierarchies of compatible maps, associated integrable difference systems and Yang-Baxter maps.} 
{\em  Nonlinearity, \textbf{36}, 2514 (2023)}.

\bibitem{Kassotakis-Kouloukas}
{P. Kassotakis and T. Kouloukas.}
{On non-abelian quadrirational Yang--Baxter maps.} 
{\em  J. Phys. A: Math. Theor. \textbf{55} 175203} (2022)

\bibitem{Adler}
{V. Adler,} {Recuttings of polygons.} {Funktsional. Anal. i
  Prilozhen.} {\textbf{27}} {79--82} (1993).

\bibitem{Doliwa-Kashaev}
{A. Doliwa and R.M. Kashaev,} {Non-commutative bi-rational maps satisfying Zamolodchikov equation equation, and Desargues lattices.} {J. Math. Phys.} {61 (2020)} {092704}.

\bibitem{Sokor-Nikitina}
{S. Konstantinou-Rizos and A.A. Nikitina.}
{Yang--Baxter maps of KdV, NLS and DNLS type on division rings.} {Physica D} {465} (2024) {134213}.

\bibitem{Sokor-Kouloukas}
{S. Konstantinou-Rizos T. Kouloukas.} {A noncommutative discrete potential {K}d{V} lift.} {\em J. Math. Phys.} {\textbf{59} (2018)} {063506}.

\bibitem{Sokor-Sasha-2}
{S. Konstantinou-Rizos and A.V. Mikhailov.}  {Anticommutative extension of the Adler map.} {\em J. Phys. A: Math. Theor.} {\textbf{49}} (2016) {30LT03}.

\bibitem{SKR-2023-PhysD}
{S. Konstantinou-Rizos. Birational solutions to the set-theoretical 4-simplex equation. Physica D 448 (2023) 133696}.

\bibitem{SKR-JHEP}
{S. Konstantinou-Rizos.}
{Electric network and Hirota type 4-simplex maps.} 
{\em JHEP 06 (2024)} {094} 

\bibitem{EKT-2024}
{Ch. Evripidou, P. Kassotakis, A. Tongas.} {On quadrirational pentagon maps. J. Phys. A: Math. Theor. 57 (2024)} {455203}.

\bibitem{Korepanov-Kolmogorov}
{I.G. Korepanov.} {Odd-gon relations and their cohomology.} { Partial Differential Equations in Applied Mathematics \textbf{11}, 100856 (2024)}

\bibitem{Kassotakis-Proc}
{P. Kassotakis.}
{Matrix factorizations and pentagon maps.} 
{\em  Proc. R.  Soc. A, \textbf{479}, 2280 (2023)}.

\bibitem{Drinfeld-pentagon}
{V.G. Drinfeld. Quasi--Hopf algebras. Leningrad Math. J., 1:1419--1457, (1989).}
\end{thebibliography}
\end{document}